
\documentclass[10pt,journal,compsoc]{IEEEtran}
%
% If IEEEtran.cls has not been installed into the LaTeX system files,
% manually specify the path to it like:
% \documentclass[10pt,journal,compsoc]{../sty/IEEEtran}

% Some very useful LaTeX packages include:
% (uncomment the ones you want to load)

% *** MISC UTILITY PACKAGES ***
%
%\usepackage{ifpdf}
% Heiko Oberdiek's ifpdf.sty is very useful if you need conditional
% compilation based on whether the output is pdf or dvi.
% usage:
% \ifpdf
%   % pdf code
% \else
%   % dvi code
% \fi
% The latest version of ifpdf.sty can be obtained from:
% http://www.ctan.org/pkg/ifpdf
% Also, note that IEEEtran.cls V1.7 and later provides a builtin
% \ifCLASSINFOpdf conditional that works the same way.
% When switching from latex to pdflatex and vice-versa, the compiler may
% have to be run twice to clear warning/error messages.

% *** CITATION PACKAGES ***
%
\ifCLASSOPTIONcompsoc
  % IEEE Computer Society needs nocompress option
  % requires cite.sty v4.0 or later (November 2003)
  \usepackage[nocompress]{cite}
   \usepackage{amsmath}
   \usepackage{graphicx}
   \usepackage{booktabs}
\else
  % normal IEEE
  \usepackage{cite}
   \usepackage{amsmath}
   \usepackage{graphicx}
   \usepackage{booktabs}

\fi
% cite.sty was written by Donald Arseneau
% V1.6 and later of IEEEtran pre-defines the format of the cite.sty package
% \cite{} output to follow that of the IEEE. Loading the cite package will
% result in citation numbers being automatically sorted and properly
% "compressed/ranged". e.g., [1], [9], [2], [7], [5], [6] without using
% cite.sty will become [1], [2], [5]--[7], [9] using cite.sty. cite.sty's
% \cite will automatically add leading space, if needed. Use cite.sty's
% noadjust option (cite.sty V3.8 and later) if you want to turn this off
% such as if a citation ever needs to be enclosed in parenthesis.
% cite.sty is already installed on most LaTeX systems. Be sure and use
% version 5.0 (2009-03-20) and later if using hyperref.sty.
% The latest version can be obtained at:
% http://www.ctan.org/pkg/cite
% The documentation is contained in the cite.sty file itself.
%
% Note that some packages require special options to format as the Computer
% Society requires. In particular, Computer Society  papers do not use
% compressed citation ranges as is done in typical IEEE papers
% (e.g., [1]-[4]). Instead, they list every citation separately in order
% (e.g., [1], [2], [3], [4]). To get the latter we need to load the cite
% package with the nocompress option which is supported by cite.sty v4.0
% and later. Note also the use of a CLASSOPTION conditional provided by
% IEEEtran.cls V1.7 and later.

\usepackage{bm}
\usepackage{algorithm}
\usepackage{algpseudocode}
\usepackage{amsmath}
\usepackage{graphicx}
\usepackage{amssymb}
\usepackage{amsthm}
\usepackage{algorithmicx}
\usepackage{algpseudocode}
\usepackage{bm}
\usepackage{makecell}
\usepackage{epstopdf}
\usepackage{colortbl}
\usepackage{url}
\usepackage{overpic}
\usepackage{hyperref}
\usepackage{multicol}
\usepackage{multirow}
\usepackage{float}
\newtheorem{theorem}{Theorem}

%\renewcommand\arraystretch{1.5}
  % Use Input in the format of Algorithm
 % Use Output in the format of Algorithm

% *** GRAPHICS RELATED PACKAGES ***
%
\ifCLASSINFOpdf
  % \usepackage[pdftex]{graphicx}
  % declare the path(s) where your graphic files are
  % \graphicspath{{../pdf/}{../jpeg/}}
  % and their extensions so you won't have to specify these with
  % every instance of \includegraphics
  % \DeclareGraphicsExtensions{.pdf,.jpeg,.png}
\else
  % or other class option (dvipsone, dvipdf, if not using dvips). graphicx
  % will default to the driver specified in the system graphics.cfg if no
  % driver is specified.
  % \usepackage[dvips]{graphicx}
  % declare the path(s) where your graphic files are
  % \graphicspath{{../eps/}}
  % and their extensions so you won't have to specify these with
  % every instance of \includegraphics
  % \DeclareGraphicsExtensions{.eps}
\fi
\hyphenation{op-tical net-works semi-conduc-tor}

\begin{document}
%
% paper title
% Titles are generally capitalized except for words such as a, an, and, as,
% at, but, by, for, in, nor, of, on, or, the, to and up, which are usually
% not capitalized unless they are the first or last word of the title.
% Linebreaks \\ can be used within to get better formatting as desired.
% Do not put math or special symbols in the title.
\title{Detecting Statistically Significant Communities }
%
%
% author names and IEEE memberships
% note positions of commas and nonbreaking spaces ( ~ ) LaTeX will not break
% a structure at a ~ so this keeps an author's name from being broken across
% two lines.
% use \thanks{} to gain access to the first footnote area
% a separate \thanks must be used for each paragraph as LaTeX2e's \thanks
% was not built to handle multiple paragraphs
%
%
%\IEEEcompsocitemizethanks is a special \thanks that produces the bulleted
% lists the Computer Society journals use for "first footnote" author
% affiliations. Use \IEEEcompsocthanksitem which works much like \item
% for each affiliation group. When not in compsoc mode,
% \IEEEcompsocitemizethanks becomes like \thanks and
% \IEEEcompsocthanksitem becomes a line break with idention. This
% facilitates dual compilation, although admittedly the differences in the
% desired content of \author between the different types of papers makes a
% one-size-fits-all approach a daunting prospect. For instance, compsoc
% journal papers have the author affiliations above the "Manuscript
% received ..."  text while in non-compsoc journals this is reversed. Sigh.

\author{Zengyou~He,
        Hao~Liang,
        Zheng~Chen,
        Can~Zhao,
        Yan~Liu% <-this % stops a space
\IEEEcompsocitemizethanks{\IEEEcompsocthanksitem Z.~He is with School of Software, Dalian University of Technology, Dalian,China, and Key Laboratory for Ubiquitous Network and Service Software
of Liaoning Province, Dalian, China.\protect\\
E-mail: zyhe@dlut.edu.cn% <-this % stops a space
% note need leading \protect in front of \\ to get a newline within \thanks as
% \\ is fragile and will error, could use \hfil\break instead.
\IEEEcompsocthanksitem H.~Liang, Z.~Chen and Y.~Liu are with School of Software, Dalian University of Technology, Dalian, China.
\IEEEcompsocthanksitem C.~Zhao is with Institute of Information Engineering, CAS.}}% <-this % stops a space
\IEEEtitleabstractindextext{%
\begin{abstract}
Community detection is a key data analysis problem across different fields. During the past decades, numerous algorithms have been proposed to address this issue. However, most work on community detection does not address the issue of statistical significance. Although some research efforts have been made towards mining statistically significant communities, deriving an analytical solution of $p$-value for one community under the configuration model is still a challenging mission that remains unsolved. The configuration model is a widely used random graph model in community detection, in which the degree of each node is preserved in the generated random networks. To partially fulfill this void, we present a tight upper bound on the $p$-value of a single community under the configuration model, which can be used for quantifying the statistical significance of each community analytically. Meanwhile, we present a local search method to detect statistically significant communities in an iterative manner. Experimental results demonstrate that our method is comparable with the competing methods on detecting statistically significant communities.
\end{abstract}

% Note that keywords are not normally used for peerreview papers.
\begin{IEEEkeywords}
Community Detection, Random Graphs, Configuration Model, Statistical Significance.% \LaTeX, paper, template.
\end{IEEEkeywords}}

% make the title area
\maketitle

% To allow for easy dual compilation without having to reenter the
% abstract/keywords data, the \IEEEtitleabstractindextext text will
% not be used in maketitle, but will appear (i.e., to be "transported")
% here as \IEEEdisplaynontitleabstractindextext when the compsoc
% or transmag modes are not selected <OR> if conference mode is selected
% - because all conference papers position the abstract like regular
% papers do.
\IEEEdisplaynontitleabstractindextext
% \IEEEdisplaynontitleabstractindextext has no effect when using
% compsoc or transmag under a non-conference mode.

% For peer review papers, you can put extra information on the cover
% page as needed:
% \ifCLASSOPTIONpeerreview
% \begin{center} \bfseries EDICS Category: 3-BBND \end{center}
% \fi
%
% For peerreview papers, this IEEEtran command inserts a page break and
% creates the second title. It will be ignored for other modes.
\IEEEpeerreviewmaketitle

\IEEEraisesectionheading{\section{Introduction}\label{sec:introduction}}
% Computer Society journal (but not conference!) papers do something unusual
% with the very first section heading (almost always called "Introduction").
% They place it ABOVE the main text! IEEEtran.cls does not automatically do
% this for you, but you can achieve this effect with the provided
% \IEEEraisesectionheading{} command. Note the need to keep any \label that
% is to refer to the section immediately after \section in the above as
% \IEEEraisesectionheading puts \section within a raised box.

% The very first letter is a 2 line initial drop letter followed
% by the rest of the first word in caps (small caps for compsoc).
%
% form to use if the first word consists of a single letter:
% \IEEEPARstart{A}{demo} file is ....
%
% form to use if you need the single drop letter followed bys
% normal text (unknown if ever used by the IEEE):
% \IEEEPARstart{A}{}demo file is ....
%\cite
% Some journals put the first two words in caps:
% \IEEEPARstart{T}{his demo} file is ....
%
% Here we have the typical use of a "T" for an initial drop letter
% and "HIS" in caps to complete the first word.
\IEEEPARstart{N}{etworks} are widely used for modeling the structure of complex systems in many fields, such as biology, engineering, and social science. Within the networks, some vertices with similar properties will form communities that have more internal connections and less external links. Community detection is one of the most important tasks in network science and data mining, which can reveal the hierarchy and organization of network structures. The detection of meaningful communities is of great importance in many real applications \cite{fortunato2010community}. In these diverse domains, each detected community may provide us with some interesting yet unknown information.
% You must have at least 2 lines in the paragraph with the drop letter
% (should never be an issue)
%I wish you the best of success.

%\hfill mds

%\hfill August 26, 2015

Due to the importance of the community detection problem, numerous algorithms from different perspectives have been proposed during the past decades \cite{fortunato2010community,fortunato2016community}. Although existing community detection methods have very different procedures to detect community structures, these methods can be roughly classified into different categories according to the objective function and the corresponding search procedure \cite{duan2014community}. The objective function is used for evaluating the quality of candidate communities, which serves as a guideline for the search procedure and is critical to the success of the entire community detection algorithm. Probably the modularity proposed by Newman and Girvan \cite{newman2004finding} is the most popular objective function in the field of community detection. Based on modularity, some popular community detection methods, such as Louvain \cite{blondel2008fast} and FastGreedy \cite{newman2004fast}, have been proposed to search the communities in networks. In addition, many other objective functions have been proposed to evaluate the goodness of communities as well, as summarized in \cite{chakraborty2017metrics}. Overall, there is still no such an objective function that is able to always achieve the best performance in all possible scenarios.

However, most of these objective functions (metrics) do not address the issue of statistical significance of communities. In other words, how to judge whether one community or a network partition is real or not based on some rigorous statistical significance testing procedures. Such testing-based approaches provide many advantages over other metrics. First of all, the statistical significance of communities can be quantified in terms of the $p$-value, which is a universally understood measure between 0 and 1 in different fields. As a result, the threshold for the $p$-value is easy to specify since it corresponds to the significance level. In contrast, numerical values generated from other objective functions are generally data-dependent, which is hard for people to interpret and determine a universal threshold across all data sets. Furthermore, the $p$-value is typically derived under a certain random graph model in a mathematically sound manner, while many other metrics may be just defined in an ad-hoc manner.

In many real applications, it is critical to find network communities that are statistically significant. For instance, the detection of protein complexes from protein-protein interaction (PPI) networks corresponds to a special community detection problem \cite{bhowmick2015clustering}. However, the PPI networks derived from current high-throughput experimental techniques are very noisy, in which many spurious edges are present while some true edges are missing \cite{teng2014network, newman2018network}. More importantly, some reported protein complexes (communities in PPI networks) need to be further validated by wet-lab experiments that are costly and time-consuming. Thus, in these types of applications, the quality of identified communities must be strictly controlled through rigourous hypothesis testing procedures in order to obtain really meaningful output (e.g. \cite{spirin2003protein, he2019protein, su2018statistical}). In fact, the issue of errors and noises in the network data has also be recognized in many other fields such as social science \cite{newman2018network}. Therefore, the detection of statistically significant communities will be desirable as well in these fields. Furthermore, the $p$-value of each community can be used as the metric in emerging applications such as network community prioritization \cite{zitnik2018prioritizing}.

To address the problem of discovering statistically significant communities, some research efforts have been made towards this direction (e.g. \cite{aldecoa2011deciphering,miyauchi2016z,hu2010measuring,carissimo2018validation, lancichinetti2011finding,wilson2014testing,palowitch2016significance,kojaku2018generalised}). In these methods, the first critical issue that must be solved is to define a metric to effectively assess the statistical significance of communities. This is because the significance-based metric will be used as the objective function for finding statistically significant communities. Therefore, different measures have been presented for assessing the statistical significance of communities, which can be classified into two categories: the statistical significance of one network partition (e.g.\cite{aldecoa2011deciphering,miyauchi2016z,hu2010measuring,carissimo2018validation}) and the statistical significance of a single community (e.g.\cite{lancichinetti2011finding,wilson2014testing,palowitch2016significance,kojaku2018generalised}).

Although the metrics that evaluate a network partition can provide a global view on the set of all generated communities, they generally cannot guarantee that every single community is statistically significant as well. In addition, many different partitions of the same network may lead to quite similar significance values, making it difficult to determine which partition should be reported as the final result. Furthermore, such partition-based significance metrics typically focus on the assessment of a set of non-overlapping communities. Therefore, evaluating the statistical significance of each single community is more meaningful in community detection. Meanwhile, these methods can be also classified by the techniques for deriving the $p$-values: analytical methods (e.g.\cite{lancichinetti2011finding,wilson2014testing,aldecoa2011deciphering,palowitch2016significance,miyauchi2016z}) or sampling methods (e.g.\cite{kojaku2018generalised,hu2010measuring,carissimo2018validation}). The sampling method calculates an empirical $p$-value through generating a number of random graphs. Although the sampling method is easy to understand and implement, it is very time consuming since many random networks have to be generated and analyzed. More importantly, the $p$-values of the same community may be inconsistent in different runs and the precision of the $p$-value is dependent on the number of random graphs. Therefore, the analytical method is much preferable to the sampling method since it can provide an analytical solution of the $p$-value.

Unfortunately, how to assess the statistical significance of a single community (sub-graph) analytically is a challenging task due to the difficulty on calculating the probability of finding a community from the random graphs generated from a specific null model. As a result, only a few research efforts have been made towards this direction \cite{lancichinetti2011finding,koyuturk2007assessing,lancichinetti2010statistical,wang2008spatial,wilson2014testing,miyauchi2015network,palowitch2016significance}. The details of these methods will be provided in the related work of this paper. Here we just highlight the fact that the configuration model is the most widely used random graph model in the literature. Unfortunately, how to calculate the analytical $p$-value under the configuration model still remains unsolved. Existing significance-based metrics \cite{lancichinetti2011finding,lancichinetti2010statistical,wilson2014testing,palowitch2016significance} were built on the probability that each node belongs to the community under the configuration model. In other words, these methods did not evaluate the statistical significance that one single community will appear in random graphs in a straightforward manner.

In this paper, we first propose a tight upper bound on the $p$-value under the configuration model, which can be used analytically for assessing the statistical significance of a single community. Then, we present a local search method to detect statistically significant communities in an iterative manner. Experimental results on both real data sets and the LFR benchmark data sets show that our method is comparable with the competing methods in terms of different evaluation metrics.

The main contributions of this paper can be summarized as follows:

\begin{itemize}
  \item We propose a new method for assessing the statistical significance of one single community. To our knowledge, this is the first piece of work that delivers an analytical $p$-value (or its upper bound) for quantifying a community under the configuration model. Unlike existing methods under the configuration model that quantify the statistical significance through the membership probabilities of single nodes, our method directly evaluates each candidate community.
  \item We provide a systematic summarization and analysis on the existing methods for detecting statistically significant communities, which may serve as the foundation for the further investigation towards this direction.
  \item We present a local search method to conduct the community detection based on the proposed upper bound of $p$-value. Extensive empirical studies validate the effectiveness of our method on community evaluation and detection.
\end{itemize}

The remaining parts of this paper are structured as follows: Section 2 presents and discusses related work in a systematic manner. Section 3 introduces our definition on the statistical significance of a single community and the corresponding community detection method. Section 4 shows the experimental results and Section 5 concludes this paper.
\section{Related Work}
\subsection{Statistically Significant Community Detection}
\subsubsection{An overview of the categorization}
To quantify the statistical significance of communities, one needs to choose a random graph model that specifies how the reference random graphs are generated. Therefore, the random graph model can be used as a criterion for categorizing available significance-based community detection methods. There are many random graph models in the literature \cite{durrett2007random}, such as Barab\'{a}si-Albert model, Watts and Strogatz model,  hierarchal model and so on. Here we mainly discuss three random graph models which have been widely exploited in the field of statistically significant community detection:  Erd\"{o}s--R\'{e}nyi model, configuration model and stochastic block model.
\begin{itemize}
  \item Erd\"{o}s--R\'{e}nyi (E-R) model has two closely related variants: $G(N,M)$ and $G(N,p)$. In $G(N,M)$ model, a graph is chosen uniformly at random from the set of all graphs with $N$ nodes and $M$ edges. In $G(N,p)$ model, a graph is constructed by connecting two vertices randomly and independently with the probability $p(0<p<1)$.
  \item The configuration model generates a random graph in which each node has a fixed degree. In other words, the degree of each node will be the same in all random graphs.

  \item Stochastic block model produces a random graph in which any two vertices are connected by an edge with a probability that is determined by their community memberships. That is, the information on the underlying communities are assumed to be known, and two vertices will be connected with a higher connection probability from the same community.
\end{itemize}

In addition to the random graph model, another two criteria can be used for categorizing related methods. One is the target of evaluation, and the other is the technique for deriving the $p$-values. In fact, one can evaluate the quality of a partition of one network or assess the statistical significance of a single community. Hence, the target of evaluation can be either the full partition of a network or one single community. Regardless of the target of evaluation, we can calculate the $p$-value using analytical methods or sampling techniques. Sampling techniques are quite time-consuming since a large number of random graphs should be generated to derive the empirical null distribution of test statistics. On the other hand, analytical methods try to derive an analytical solution of the $p$-value. For complex random graph models, it is very challenging to obtain an analytical $p$-value or even its tight bound.

Based on above three different criteria, existing methods for detecting statistically significant communities are categorized and summarized in Fig.\ref{related_work_fig}.
%Based on the random graph models and the type of significance metrics, existing methods for detecting statistically significant communities are summarized in Table 1.

\begin{figure}[htbp]
  \centering
  \includegraphics[width=9.5cm, height=6.4cm]{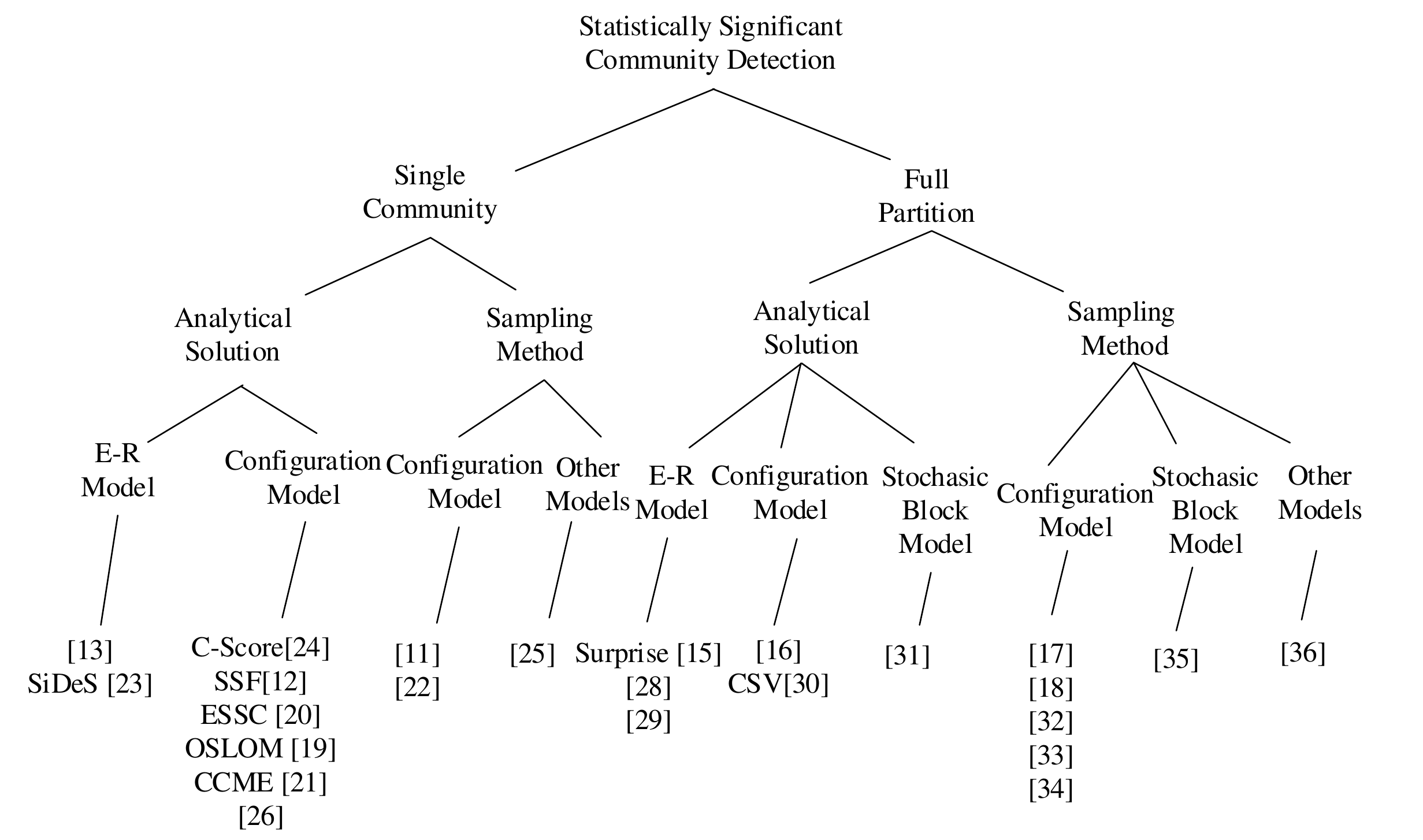}
  \caption{Existing methods under different criteria. Here ``other models'' refers to those non-standard random graph models that are proposed for special scenarios or hard to be described in an explicit manner. For instance, the Poisson random model \cite{wang2008spatial} generated random graphs with a given expected degree sequence, where each of the two end-points of an edge is chosen among all vertices through a Poisson process.}\label{related_work_fig}
\end{figure}

\subsubsection{Analytical methods for network partition}

To evaluate the quality of one network partition statistically and analytically, several different methods have been proposed in the literature.

Surprise \cite{aldecoa2011deciphering} is a measure that evaluates the distribution of intra- and inter-community links with a cumulative hypergeometric function. Significance\cite{traag2013significant} is defined as the probability for the partition to be contained in a random graph. The method in \cite{reichardt2006networks} first calculates the expected modularity under the E-R model, and then a partition is claimed to be statistically significant if its modularity is significantly higher than the expected modularity.

Under the configuration model, Z-modularity \cite{miyauchi2016z} quantifies the statistic rarity of a partition in terms of the fraction of the number of edges within communities using the $Z$-score. Meanwhile, CSV \cite{cutillo2017inferential} is a synthetic index for assessing the validity of a partition based on the concepts from the network enrichment analysis.

In \cite{perry2013statistical}, a new objective function for community detection is proposed under the stochastic block model, which leads naturally to the development of a likelihood ratio test for determining if the detected communities are statistically significant.

\subsubsection{Sampling methods for network partition}
In addition to the analytical methods, some existing methods adopt the sampling techniques to evaluate the statistical significance of a network partition.

The methods in \cite{hu2010measuring}, \cite{carissimo2018validation} and \cite{karrer2008robustness} all test ``the similarity'' or ``the difference'' between the partition of the original network and the partition of randomly perturbed networks. The method in \cite{carissimo2018validation} uses tools from functional data analysis to formulate a hypothesis testing problem that tests ``the difference'' between two curves of VI (Variation of Information), where one curve is generated from the partition of the original network and another curve is derived from the partition of randomly perturbed networks. In contrast, the method in \cite{karrer2008robustness} just uses the mean of VI between the original partition and the partition on the perturbed network as the test statistic. Similarly, the method in \cite{hu2010measuring} adopts the same network perturbation strategy as \cite{karrer2008robustness} and introduces a new index for measuring the similarity between partitions.

In addition, some existing methods first use sampling methods to generate a null distribution under the given random graph model and then test the statistical significance of the target partition. The method in \cite{chang2012assessing} uses the largest eigenvalue of the difference matrix between the affinity matrices of the network and its null model as the test statistic, where the empirical distribution of the largest eigenvalue can be approximated with a Gamma distribution. In \cite{sales2007extracting}, a set of random networks is generated for producing an empirical null distribution of partition modularities to calculate the $Z$-score for the partition on the original network.

Meanwhile, there are also several methods \cite{zhang2014scalable,massen2006thermodynamics} which adopt the sampling techniques for finding statistically significant communities based on some concepts from physics.

\subsubsection{Analytical methods for single community}

To assess the statistical significance of a single community analytically, two types of strategies have been adopted by the existing methods.

On one hand, methods such as  C-Score \cite{lancichinetti2010statistical}, SSF \cite{he2019protein}, OSLOM \cite{lancichinetti2011finding}, ESSC \cite{wilson2014testing} and CCME \cite{palowitch2016significance} first calculate the probability that ``each node belongs to the community''. Then, the statistical significance of one community can be quantified based on the statistics of several ``exceptional nodes'' in the community that have the lower membership probability. The key issue in these methods is how to quantify the connection probability between each node and the candidate community. Furthermore, one interesting observation is that all these methods adopt the configuration model as the null model.

%Different from each other, the methods in C-Score \cite{lancichinetti2010statistical}, SSF\cite{he2019protein}, OSLOM \cite{lancichinetti2011finding}, ESSC \cite{wilson2014testing} consider the number of connections between a node and a community follows different distributions (e.g. hypergeometric distribution, binomial distribution) or meets more complicated relationships. Meanwhile, CCME \cite{palowitch2016significance} extends this kind of idea into weighted networks by considering the sum of weights on edges with a vertex and a community approximately follows the normal distribution. The drawback of this kind of methods is that it will report some false positive communities when testing the statistical significance of all external nodes with a community	simultaneously due to the multiple hypothesis testing problem. Therefore, \cite{he2019protein} employs the Bonferroni correction method to control the family-wise error rate (FWER). ESSC \cite{wilson2014testing} and CCME \cite{palowitch2016significance} employ the BH procedure in \cite{benjamini1995controlling} to control the false discover rate (FDR).

On the other hand, several methods \cite{su2018statistical,koyuturk2007assessing,miyauchi2015network} try to evaluate the statistical significance of one community directly under different random graph models. More precisely, the probability of finding a community from the random network that is ``better'' than the target community, i.e. the $p$-value of target community, is directly calculated and used for the purpose of significance assessment. The main challenge is how to derive the analytical $p$-value or its upper bound under a random graph model.

%\cite{su2018statistical} tries to use sigmoid function to fit the probability of finding subgraphs with the same set of nodes that are denser than the target subgraph under the E-R model .\cite{koyuturk2007assessing} utilizes Cauchy-Schwarz's inequality and Markov's inequality to obtain the upper bound of the probability of observing a community in a graph generated under the E-R model. \cite{wang2008spatial} generalizes the spatial scan statistics to networks under the Poisson random model. Meanwhile, \cite{miyauchi2015network} quantifies the community degree in terms of the fraction of the number of edges within the community using Z-Score. Actually, it can be regarded as the single community version of Z-modularity \cite{miyauchi2016z}.

According to the evaluation strategy and the random graph model, existing analytical methods for a single community can be summarized as a tree shown in Fig.\ref{analyticalSingleCommunity}. First of all, most existing solutions try to quantify the statistical significance of one single community using ``indirect methods'' under the configuration model, and only several ``direct methods'' are available in the literature. More importantly, although the metric in \cite{miyauchi2015network} is derived under the configuration model, it is a Z-score rather than an analytical $p$-value. In other words, how to define and calculate the analytical $p$-value for one community under the configuration model remains unaddressed. In this paper, we will take a first step towards this direction by providing a tight upper bound on the analytical $p$-value under the configuration model.

\begin{figure}[htbp]
  \centering
  \includegraphics[width=8.3cm, height=4.7cm]{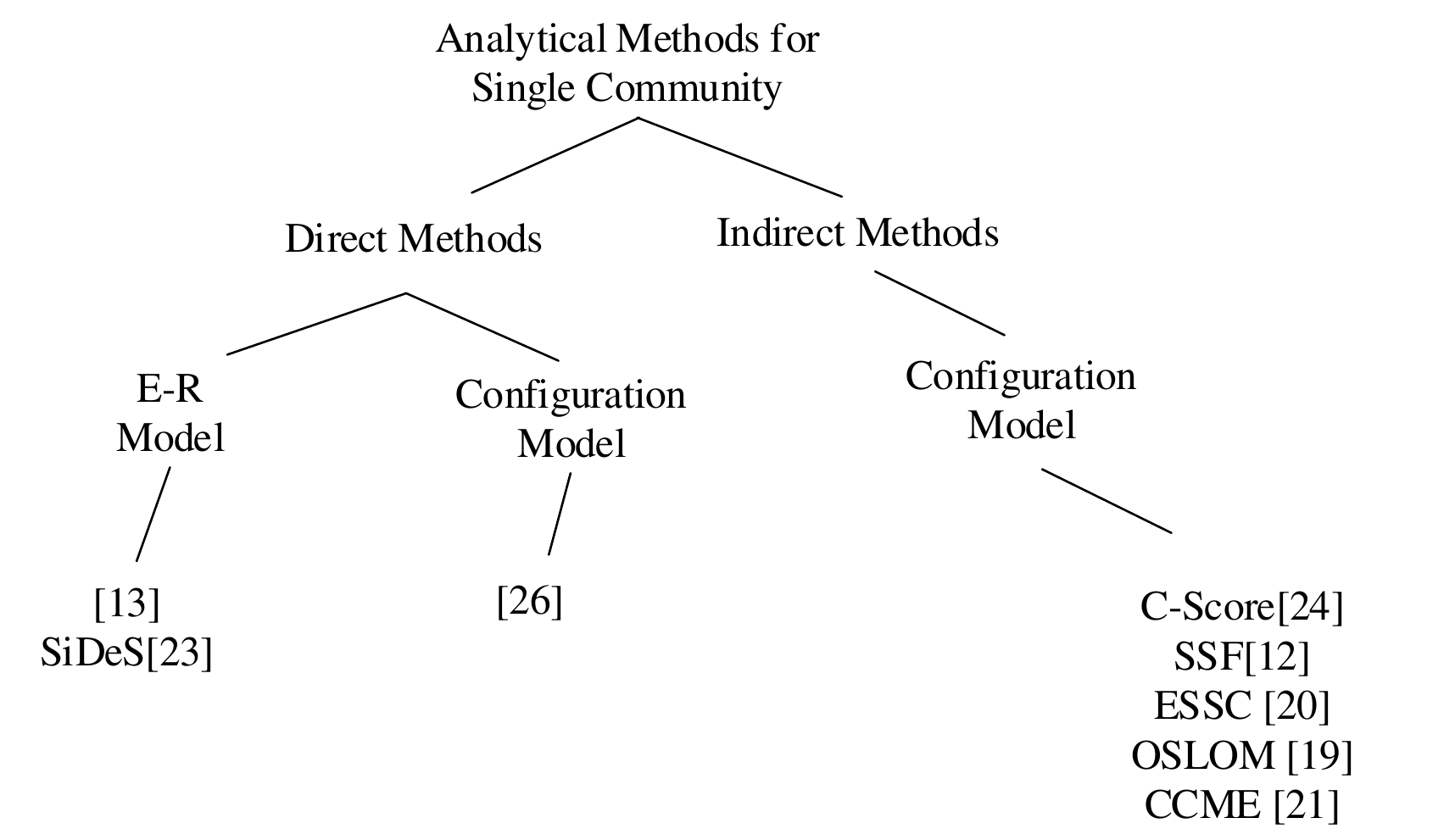}
  \caption{Existing analytical methods for a single community under two criteria. Direct methods evaluate the statistical significance of one community in a straightforward manner, while indirect methods quantify the statistical significance of one community through the membership probabilities of single nodes. }\label{analyticalSingleCommunity}
\end{figure}

\subsubsection{Sampling methods for single community}
Similarly, some existing methods also assess the statistical significance of a single community with sampling techniques. In both \cite{spirin2003protein} and \cite{kojaku2018generalised}, a fixed number of random graphs are firstly generated under the configuration model and then the $p$-value of one community $c$ is defined as the probability of finding ``better communities'' from the random graphs. The key difference between \cite{spirin2003protein} and \cite{kojaku2018generalised} lies in how to define ``better communities''. In \cite{spirin2003protein}, one community from the random graphs is said to be ``better'' if (1) it is composed of the SAME set of nodes derived from $c$ and (2) it has more internal edges than $c$. The proposed method in \cite{spirin2003protein} generalizes \cite{kojaku2018generalised}, in which a ``better community'' (1) has the same size as $c$ and (2) has better community quality value (this value can be generated from any quality function). Furthermore, \cite{wang2008spatial} assesses the statistical significance of a single community in the same way, while the random graphs are generated under the Poisson random model, and the ``better'' community is defined to have larger Poisson discrepancy than the original SAME set of nodes.

\subsection{Testing for Community Structure and Community Number}
Besides quantifying the statistical significance of communities, the significance testing problems with respect to the community structure and community number are also very important. Testing the community structure is to determine whether the community structure is present in the network. Furthermore, testing the community number is to identify the correct number of communities in a statistically sound manner under the assumption that a community structure is present.

To test the community structure, some statistical tests have been proposed in the literature, e.g., the test based on the relations between the observed frequencies of small subgraphs \cite{gao2017testing,gao2017testing1} and the test based on the probability distribution of eigenvalues of the normalized edge-weight matrix \cite{tokuda2018statistical}.

The problem of determining the number of communities is widely investigated in the literature as well \cite{chen2017network,saldana2017many,lei2016goodness,bickel2016hypothesis,zhao2011community,mcdaid2013improved,nobile2007bayesian}. Here we just highlight the fact that these methods employ different techniques from different perspectives to test if the number of communities equals a given number.

\section{Methods}

\subsection{Problem statement}
Given an undirected graph $G(V,E)$, where $V$ is the set of vertices and $E$ is the set of edges, any sub-graph $S\subseteq{V}$ can be regarded as a candidate community. Due to the lack of a consensus on the formal definition of a community, different objective functions have been proposed to evaluate the quality of $S$ (e.g. density, modularity) \cite{chakraborty2017metrics}. Without loss of generalization, here we use $f(S)$ to denote such an objective function and assume that $S$ is more likely to be a true community if $f(S)$ is larger. Then, $S$ will be regarded as a community if $f(S)$ is larger than a given threshold. Similarity, we can also define an objective function to quantify if a network partition/structure is true or not.

Here we focus on how to effectively quantify the statistical significance of a candidate community $S$ in terms of $p$-value under a given random graph model. More precisely, under the null hypothesis that $S$ is generated from a specific random graph model, the $p$-value of $S$ can be calculated as $|\{f(\widetilde{S}) \geq f(S)| \widetilde{S} \subseteq \widetilde{G}, \widetilde{G} \in R \}|/|R|$, where $R$ is the set of all possible random graphs and each $\widetilde{S}$ is an induced sub-graph (with the same set of vertices of $S$) in the corresponding random graph.

In this paper, the density function is used as the objective function to calculate the $p$-value. When the set of vertices is fixed to be the set of nodes of $S$ in each random sub-graph $\widetilde{S}$, the $f(\widetilde{S})$ function is equivalent to counting the number of internal edges within $\widetilde{S}$. Meanwhile, we adopt the configuration model as the null model for generating random graphs, in which the pre-assigned degree of each node will be preserved in the random networks.

\begin{figure*}[htb]
  \centering
  \includegraphics[width=17cm]{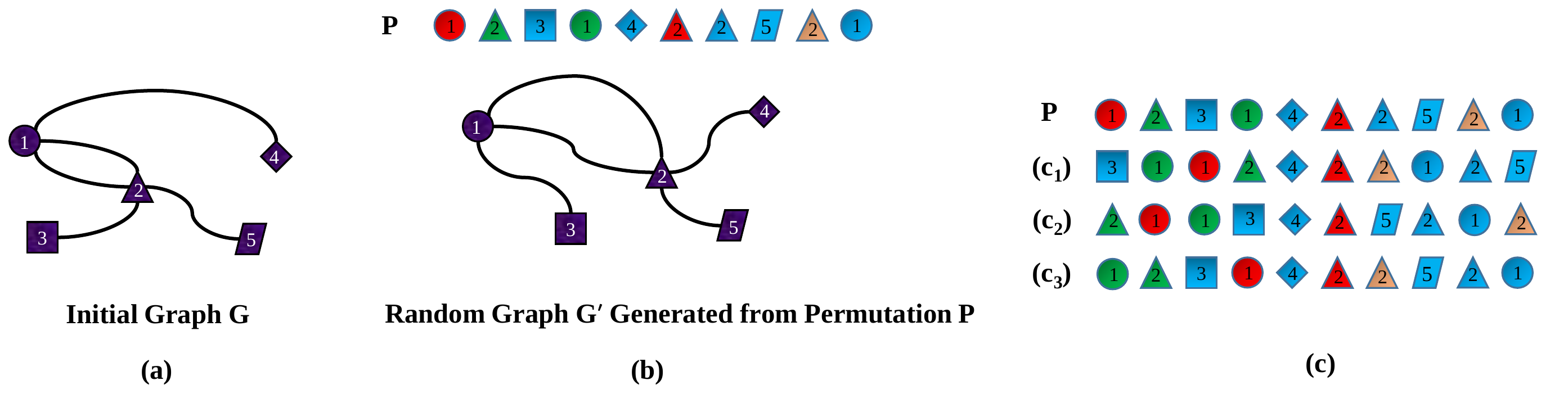}
  \caption{(a) An initial graph $G$ is composed of five nodes and five edges. The degree sequence for five nodes is: ($d_1=3,d_2=4,d_3=1,d_4=1,d_5=1$). (b) A permutation $P$ of 10 half-edges can generate a random graph $G'$. (c) Other three different permutations can also generate the same random graph $G'$.}\label{Figure 1}
\end{figure*}

\subsection{The configuration model}
Suppose the degree of each node $i \in V$ is denoted by $d_i$, then the degree of the graph $G$ can be defined as $D = \sum_{i \in V}{d_i} = 2|E|$. Similarly, the degree of a sub-graph $S$ can be calculated as $D_s = \sum_{i \in S}{d_i} = 2E_{in} + E_{out}$, where $E_{in}$ is the number of edges within $S$ and $E_{out}$ is the number of edges between $S$ and $V \backslash S$.

Under the configuration model, a random graph can be generated based on the following procedure. Firstly, each node $i$ has $d_{i}$ half-edges that need to be connected with other half-edges to form edges. There are totally $D$ half-edges in $G$ and $D_s$ half-edges derived from the nodes in $S$. A new edge can be generated by connecting two half-edges at random. We will obtain a random graph after $|E|$ pairs of randomly selected half-edges have been connected.

To obtain an analytical formula for the $p$-value of $S$, we need to know (1) the number of all possible random graphs $T$ and (2) the number of random sub-graphs that have at least $E_{in}$ edges. Note that the counting problem here is simpler than that of counting the number of two-way zero-one tables with fixed marginal sums (e.g.,\cite{chen2005sequential,mckay2011subgraphs}). This is because the random graphs generated from the configuration model are allowed to contain self-loops and multiple edges between vertices even the given graph $G$ is simple. As a result, we can obtain an analytical solution to this seemingly difficult problem. The generation mechanism of the configuration model can be formulated as an equivalent permutation-and-connection procedure\cite{radicchi2010combinatorial}:

(1) Generating a permutation of $2|E|$ half-edges. There will be $(2|E|)!$ permutations in total.

(2) For each permutation, we may obtain $|E|$ edges of a random graph by connecting the $(2i+1)^{th}$ half-edge and the $(2i+2)^{th}$ half-edge sequentially, where $i=0,1,..,|E|-1$.

Fig.1 presents an example to illustrate such an alternative random graph generating process. However, it is easy to see that the above procedure may generate many identical random graphs. According to whether identical random graphs are allowed in graph counting, we have two different $p$-value calculation formulations, which will be presented in details in section 3.3 and 3.4, respectively.

\begin{figure*}[htb]
  \centering
  \includegraphics[width=17cm]{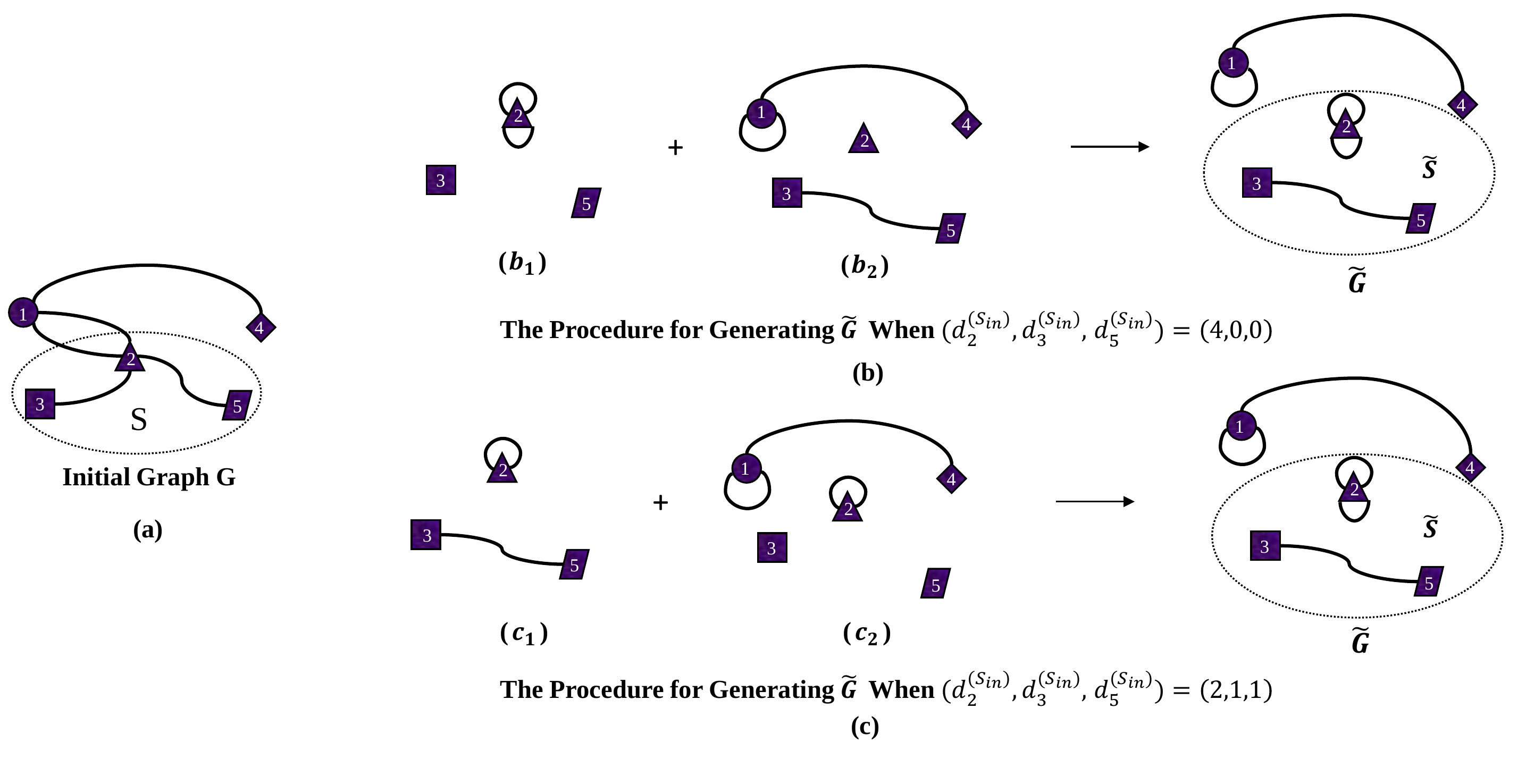}
  \caption{(a) The same initial graph $G$ as in Fig.1 (a) with the following characteristics: 5 nodes, 5 edges and the degree sequence is: ($d_1=3,d_2=4,d_3=1,d_4=1,d_5=1$). Here the sub-graph $S$ is composed of three nodes and two edges. (b) The random graph $\widetilde{G}$ that has three edges within $\widetilde{S}$ can be generated according to our procedure when $(d_2^{(S_{in})},d_3^{(S_{in})},d_5^{(S_{in})})$ = $(4,0,0)$. (c) The same random graph $\widetilde{G}$ can be generated by our procedure as well when $(d_2^{(S_{in})},d_3^{(S_{in})},d_5^{(S_{in})})$ = $(2,1,1)$.}\label{Figure 1}
\end{figure*}

\subsection{The $p$-value based on distinct random graphs}

\subsubsection{The number of distinct random graphs}

As shown in Fig.1, the permutation-and-connection procedure will generate many identical random graphs due to the following reasons.

Firstly, $|E|$ pairs of adjacent half-edges in the permutation compose the edge set of the random graph, but the generated random graph is mainly determined by the edge set rather than the order of these pairs. For example, we can obtain a new permutation shown in Fig.1 ($c_1$) by switching the first two (and the last two) pairs of half-edges in the permutation $P$. The new permutation corresponds to the same graph $G'$. That is, for a fixed random graph, the set of pairs of half-edges are fixed as well. Each pair can appear at any place of $|E|$ positions in the permutation. Therefore, there will be $|E|!$ permutations of these pairs that generate the same random graph. Hence, $(2|E|)!$ should be divided by $|E|!$ in order to count the number of distinct random graphs.

Secondly, the order of two half-edges in each generated edge has no effect on the random graph as well. That is, if we switch the positions of the $(2i+1)^{th}$ half-edge and the $(2i+2)^{th}$ half-edge, we will obtain a new permutation but it corresponds to the same random graph. For example, we can swap the positions of two half-edges in the $1^{st}$,$4^{th}$ and $5^{th}$ pairs in the permutation $P$ to generate a new permutation shown in Fig.1 ($c_2$), which also corresponds to the random graph $G'$. In summary, for a fixed random graph and a fixed order of $|E|$ pairs of half-edges, we may have $2^{|E|}$ different permutations that lead to the same random graph. Therefore, $(2|E|)!$ should be divided by $2^{|E|}$ as well.

Thirdly, the half-edges from the same node have no difference in the random graph. For example, the node $1$ in Fig.1 will generate 3 half-edges since its degree is 3. For the given permutation $P$, these 3 half-edges are assigned to positions 1, 4 and 10. If we randomly permute these 3 half-edges and distribute them to the same three positions, then we obtain a new permutation in Fig.1 ($c_3$) that will generate the random graph $G'$ as well. This means that $(2|E|)!$ should be divided by each $d_i!$, where $d_i$ is the degree of the $i^{th}$ node in the graph.

Based on the above observations, the total number of distinct random graphs $T$ under the configuration model can be calculated as:
\begin{equation}\label{2.1}
    T = \frac{(2|E|)!}{|E|!2^{|E|}\prod_{i=1}^{|V|}{d_i!}}.
\end{equation}

Since there are $|S|$ nodes in $S$, we may denote the degree of each node in $S$ by $d_i^{(S)}$, where $i$ is ranged from $1$ to $|S|$. Similarly, the degree of each node in $V \backslash S$ is denoted by $d_j^{(\overline{S})}$, where $j$ is ranged from $1$ to $|V \backslash S|$. Then, $\prod_{i=1}^{|V|}{d_i!}$ in Equation (1) equals $\prod_{i=1}^{|S|}{d_i^{(S)}!}\prod_{j=1}^{|\overline{S}|}{d_j^{(\overline{S})}!}$

\subsubsection{The number of distinct random graphs with  a ``dense'' random sub-graph}

We have shown how to calculate the number of all distinct random graphs $T$, then the remaining challenge is to calculate the number of random graphs that have at least $E_{in}$ edges within the sub-graph $\widetilde{S}$ induced from the nodes of $S$.

To fulfill this task, some variables have to be firstly introduced. For each random graph $\widetilde{G}$ that have at least $E_{in}$ edges within its sub-graph $\widetilde{S}$, there must be at least $2E_{in}$ half-edges derived from the nodes in $S$ that have been selected to form edges within $\widetilde{S}$. Let $d_{i}^{(S_{in})}$ be a random variable that denotes the number of half-edges from the $i^{th}$ node in $S$ that are selected into the set of internal half-edges. $d_i^{(S_{out})}=d_i^{(S)}-d_i^{(S_{in})}$ is the number of half-edges remained for the $i^{th}$ node. Then, $\sum_{i=1}^{|S|}{d_i^{(S_{in})}}$ should be no less than $2E_{in}$ in a random graph with a denser sub-graph $\widetilde{S}$. So we can generate all random graphs that have at least $E_{in}$ edges within $\widetilde{S}$ with the following procedure:

(1) Select $d_i^{(S_{in})}$ half-edges from the set of $d_i^{(S)}$ half-edges of the $i^{th}$ node in $S$ ($1\leqslant i \leqslant |S|$) with the constraint that $2E_{in}=\sum_{i=1}^{|S|}{d_i^{(S_{in})}}$. For a fixed vector $(d_1^{(S_{in})},d_2^{(S_{in})},..,d_{|S|}^{(S_{in})})$, no matter how $d_i^{(S_{in})}$ half-edges are selected from the set of $d_i^{(S)}$ half-edges, the generated set of $2E_{in}$ half-edges will be the same. Therefore, for a fixed vector $(d_1^{(S_{in})},d_2^{(S_{in})},..,d_{|S|}^{(S_{in})})$, the number of ways of generating this vector should be 1 rather than $\prod_{i=1}^{|S|}{\binom{d_i^{(S)}}{d_i^{(S_{in})}}}$. Then the question is reduced to calculate the number of distinct vectors $(d_1^{(S_{in})},d_2^{(S_{in})},..,d_{|S|}^{(S_{in})})$ such that $2E_{in}=\sum_{i=1}^{|S|}{d_i^{(S_{in})}}$. In fact, this is an integer partition problem in number theory and combinatorics. Later, we will show that it is not necessary to calculate this number.

(2) Under a given fixed degree sequence $(d_1^{(S_{in})},d_2^{(S_{in})}$ $,..,d_{|S|}^{(S_{in})})$, we can first generate $E_{in}$ internal edges within $\widetilde{S}$ by randomly connecting $2E_{in}$ selected half-edges from Step (1). The number of different ways that generate these $E_{in}$ edges corresponds to the number of random graphs with the following parameters: $|S|$ nodes, degree sequence $(d_1^{(S_{in})},d_2^{(S_{in})},..,d_{|S|}^{(S_{in})})$ and $E_{in}$ edges. Hence the number of ways to generate $E_{in}$ edges is :
\begin{equation}\label{2.2}
  Z_{in} = \frac{(2E_{in})!}{E_{in}!2^{E_{in}}\prod_{i=1}^{|S|}{d_i^{(S_{in})}}!}.
\end{equation}

(3) For the remaining $2|E|-2E_{in}$ half-edges with the degree sequence $(d_1^{(S_{out})},d_2^{(S_{out})},..,d_{|S|}^{(S_{out})},d_1^{(\overline{S})},d_2^{(\overline{S})},..,$ $d_{|\overline{S}|}^{(\overline{S})})$, we can generate $|E|-E_{in}$ edges by randomly connecting these half-edges. Similarly, the number of ways to generate $|E|-E_{in}$ edges can be calculated as the corresponding number of random graphs:
\begin{equation}\label{2.3}
  Z_{out} = \frac{(2|E|-2E_{in})!}{(|E|-E_{in})!2^{(|E|-E_{in})}\prod_{i=1}^{|S|}{d_i^{(S_{out})}!}\prod_{j=1}^{|\overline{S}|}{d_j^{(\overline{S})}!}}.
\end{equation}

Then, it is obvious that random graphs generated by the above three steps will contain at least $E_{in}$ edges within its sub-graph $\widetilde{S}$. However, the above procedure may generate identical random graphs under different degree sequences ($(d_1^{(S_{in})},d_2^{(S_{in})},..,d_{|S|}^{(S_{in})})$). As shown in Fig.2, our objective is to generate random graphs that have at least 2 edges within the sub-graph $\widetilde{S}$ induced from the nodes of $S$. In step (1), we may select different degree sequences $(d_2^{(S_{in})},d_3^{(S_{in})},d_{5}^{(S_{in})})$. For example, we select $(d_2^{(S_{in})},d_3^{(S_{in})},d_{5}^{(S_{in})}) = (4,0,0)$ in Fig.2 ($b_1$) and $(d_2^{(S_{in})},d_3^{(S_{in})},d_{5}^{(S_{in})}) = (2,1,1)$ in Fig.2 ($c_1$). In step (2) and step (3), we can obtain  the same random graph $\widetilde{G}$ under these two different degree sequences. Therefore, the number of distinct random graphs that have at least $E_{in}$ edges within the subgraph $\widetilde{S}$ induced from the nodes of $S$ is no more than $\sum_{(d_1^{(S_{in})},d_2^{(S_{in})},..,d_{|S|}^{(S_{in})})}{(Z_{in}*Z_{out})}$.

Putting all together, we can obtain an upper bound on the $p$-value of $S$:

\begin{align}\label{2.4}
  pvalue(S)  & \leq  \frac{\sum_{(d_1^{(S_{in})},d_2^{(S_{in})},..,d_{|S|}^{(S_{in})})}{(Z_{in}*Z_{out})}}{\frac{(2|E|)!}{|E|!2^{|E|}\prod_{i=1}^{|V|}{d_i!}}} \notag \\
              & =  \frac{(\sum_{(d_1^{(S_{in})},d_2^{(S_{in})},..,d_{|S|}^{(S_{in})})}{\prod_{i=1}^{|S|}{\binom{d_i^{(S)}}{d_i^{(S_{in})}}}}) \binom{|E|}{E_{in}}}  {\binom{2|E|}{2E_{in}}}  \notag \\
              & =  \frac{\binom{D_s}{2E_{in}} \binom{|E|}{E_{in}}}{\binom{2|E|}{2E_{in}}}.
\end{align}

In the last equation in (\ref{2.4}), $(\sum_{(d_1^{(S_{in})},d_2^{(S_{in})},..,d_{|S|}^{(S_{in})})}{\prod_{i=1}^{|S|}{\binom{d_i^{(S)}}{d_i^{(S_{in})}}}}) = \binom{D_s}{2E_{in}} $ because both the left and the right side of the equation corresponds to the number of choosing $2E_{in}$ edges from $D_s$ half-edges under the assumption that half-edges from the same node are distinct.

Moreover, this upper bound is tight since $pvalue(S)=\frac{\binom{D_s}{2E_{in}} \binom{|E|}{E_{in}}}{\binom{2|E|}{2E_{in}}}$  when $D_s=2E_{in}$. In this case, the sub-graph $S$ is disconnected with all other vertices in $G$. In other words, $S$ is a connected component of $G$. To generate a random graph that has at least $E_{in}$ edges within $\widetilde{S}$, each entry in $(d_1^{(S_{out})},d_2^{(S_{out})},..,d_{|S|}^{(S_{out})})$ must be zero when $D_s=2E_{in}$. As a result, no identical random graphs will be generated since there will be no overlap between the edge sets in step (2) and step (3).

\begin{figure}[htbp]
  \centering
  \begin{overpic}
  [scale=0.65]{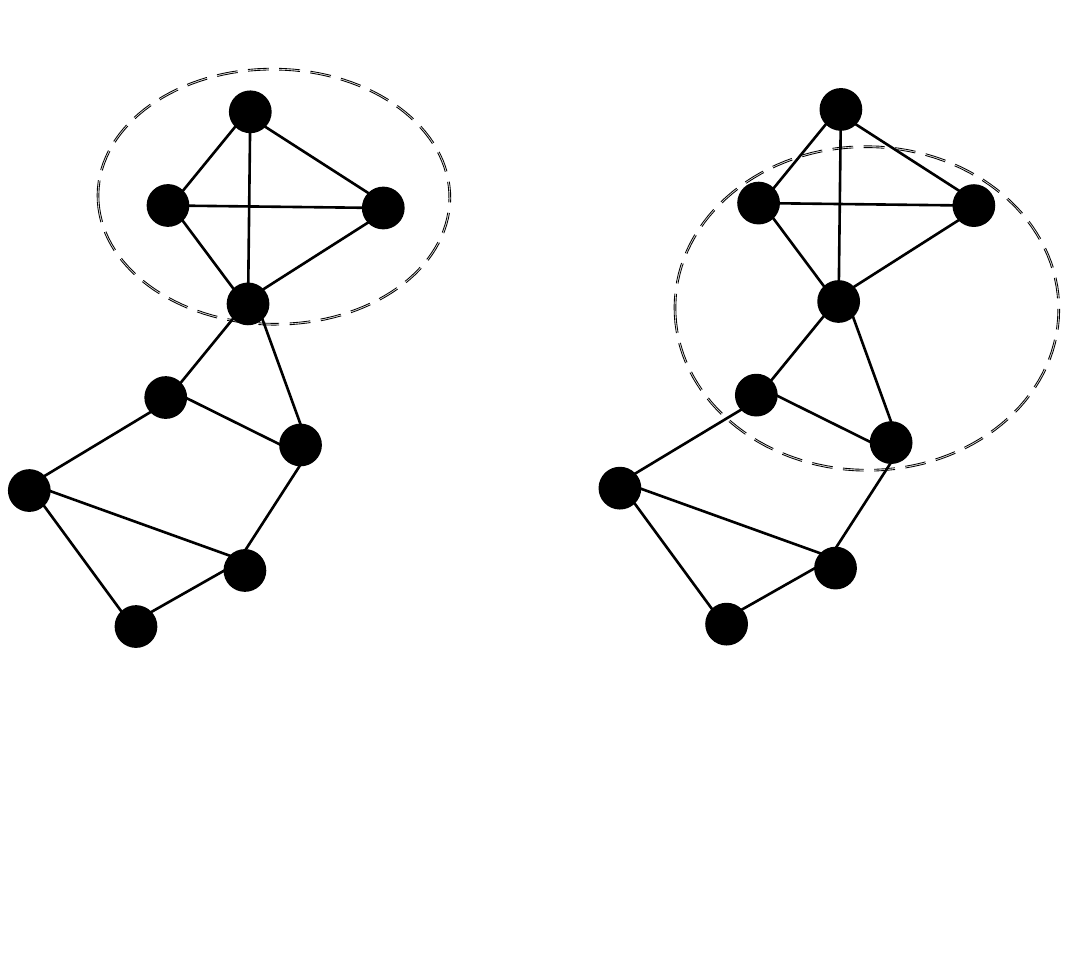}
  \put(32,72){$S_1$}
  \put(85,57){$S_2$}
  \put(15,22){(1)}
  \put(70,22){(2)}
  \put(-3,11){(1) $pvalue(S_1) \leq \frac{\binom{D_s}{2E_{in}} \binom{|E|}{E_{in}}}{\binom{2|E|}{2E_{in}}} = \frac{\binom{14}{12} \binom{14}{6}}{\binom{28}{12}} = 8.98 \times 10 ^ {-3}$}
  \put(-3,0) {(2) $pvalue(S_2) \leq \frac{\binom{D_s}{2E_{in}} \binom{|E|}{E_{in}}}{\binom{2|E|}{2E_{in}}} = \frac{\binom{17}{12} \binom{14}{6}}{\binom{28}{12}} = 0.61$}

  \end{overpic}
  \caption{Examples for illustrating the calculation of the upper bound on the $p$-value. The first subgraph $S_1$ is a clique of size 4, which has 6 internal edges and 2 external edges. The second subgraoh $S_2$ is composed of 5 nodes, which has 6 internal edges and 5 external edges. Intuitively, $S_1$ is more likely to be a true community than $S_2$. Therefore, $S_1$ should have a lower $p$-value than $S_2$. As expected, the upper bound on the $p$-value of $S_1$ is much smaller than that of $S_2$ according to Equation (4) and (7).}\label{p_value_example}
\end{figure}

\subsection{The $p$-value when all half-edges are distinct}

We have shown how to derive an upper bound on the $p$-value based on distinct random graphs, in which the half-edges from the same node have no difference in the random graph. We can also assume that the half-edges from the same node are distinguishable in the generation of random graphs. Under this assumption, the total number of random graphs $T$ under the configuration model becomes:

\begin{equation}\label{2.5}
  T = \frac{(2|E|)!}{|E|!2^{|E|}}.
\end{equation}

Accordingly, random graphs that have at least $E_{in}$ edges within the induced sub-graph $\widetilde{S}$ can be generated by first selecting $2E_{in}$ half-edges from $D_s$ half-edges to generate a sub-graph which has $E_{in}$ internal edges, then randomly connecting the remaining $2|E|-2E_{in}$ half-edges. However, the above procedure may also produce identical random graphs (even the half-edges from the same node are assumed to be distinct) due to the same reason as we have discussed in section 3.3. Therefore, the number of random graphs $Z$ that have at least $E_{in}$ edges within the sub-graph $\widetilde{S}$ satisfies:

\begin{equation}\label{2.6}
  Z \leq \binom{D_s}{2E_{in}} \frac{(2E_{in})!}{(E_{in})!2^{E_{in}}} \frac{(2|E|-2E_{in})!}{(|E|-E_{in})!2^{(|E|-E_{in})}}.
\end{equation}

Finally, we can obtain the same tight upper bound on the $p$-value as in section 3.3:

\begin{align}\label{2.7}
   pvalue(S)   & \leq  \frac{\binom{D_s}{2E_{in}} \frac{(2E_{in})!}{(E_{in})!2^{E_{in}}} \frac{(2|E|-2E_{in})!}{(|E|-E_{in})!2^{(|E|-E_{in})}}}{\frac{(2|E|)!}{|E|!2^{|E|}}}  \notag \\
               & =  \frac{\binom{D_s}{2E_{in}} \binom{|E|}{E_{in}}}{\binom{2|E|}{2E_{in}}}.
\end{align}

In Fig.\ref{p_value_example}, we use two subgraphs as examples to illustrate how to calculate the upper bound of the $p$-value according to Equation (4) and (7). As shown in Fig.\ref{p_value_example}, a smaller upper bound of  $p$-value will be assigned to the subgraph that is more likely to be a true community.

\subsection{The DSC method}
The proposed DSC algorithm for detecting statistically significant communities is described in Algorithm 1. The input of the algorithm is composed of a undirected graph $G(V,E)$ and a significance level threshold, and the output is a set of statistically significant communities. Note that the upper bound in Equation \ref{2.4} and \ref{2.7} is used as the $p$-value in the algorithm. Meanwhile, we use the Stirling formula to approximate the factorial in the $p$-value calculation:

\begin{equation}\label{3.5}
  n! \approx \sqrt{2\pi n}(\frac{n}{e})^n.
\end{equation}

At the beginning of the algorithm, we initialize the $NodeList$ by using all nodes in $G$. Meanwhile, we choose the node with the maximal clustering coefficient from the $NodeList$ and its neighbors to form the seed community (Line 3$\sim$4) for detecting a single statistically significant community. The local clustering coefficient of a node quantifies how close its neighbors are to being a clique, so it can be used to measure if the node and its neighbors tend to cluster together. The clustering coefficient of a node can be defined as:

\begin{equation}\label{m1}
  C_i = \frac{|\{e_{jk}:j,k \in N_i,e_{jk} \in E\}|}{|N_i|(|N_i|-1)/2},
\end{equation}
where $N_i$ is the set of neighbors of the node $i$ and $e_{jk}$ denotes the edge between two nodes $j$ and $k$ in the neighborhood $N_i$. Therefore, the local clustering coefficient of a node is the proportion of the number of edges between the nodes within its neighborhood divided by the maximal number of edges that could exist between them. If the degree of a node is 2 and its two neighbors have an edge, then its clustering coefficient will be 1. Hence, those nodes with only two neighbors will not be used to generate the seed community even their clustering coefficients are 1.

After the initialization, we detect a single community with a local search method and remove nodes in this community from $NodeList$. We repeat the initialization and local search steps until the $NodeList$ is empty.

\begin{algorithm}[htp]
\small
\centering
  \caption{DSC Algorithm.}
  \label{alg:Framwork}
  \begin{algorithmic}[1]
    \Require
      An undirect network $G(V,E)$;A significance level parameter $\alpha \in (0,1)$.
    \Ensure
      A set of statistically significant communities $\bm{SC}$.
    \State Initialization: $NodeList \leftarrow V$.

    \While {$NodeList\ != \emptyset $ }
      \State $s \leftarrow max($\emph{clusterCoefficient}$(NodeList))$.
      \State $ns \leftarrow s \cup \{t \in V | (s,t) \in E\} $.
      \State $sc \leftarrow Search\_One\_Community(ns)$.
      \If {$|sc| > 2\ and\ pvalue(sc) < \alpha$}
            \State $\bm{SC}.add(sc)$.
            \State $NodeList.remove(sc)$.
      \EndIf
    \EndWhile
    \State $Return\ \bm{SC}$.
  \end{algorithmic}
\end{algorithm}

Given a seed set $ns$, $Search\_One\_Community(ns)$ returns a single statistically significant community using a local search procedure in Algorithm 2. In this procedure, we try to include one node into $ns$ or remove one node from $ns$ to check if such operations can lead to smaller $p$-values. The operation that can obtain a new node set that has the smallest $p$-value is retained. The updated node set is used as the seed set again to continue the local search procedure until the $p$-value cannot be further reduced. To accelerate the convergence speed, we impose an additional threshold parameter on the difference value between the logarithm of new $p$-value and that of the original $p$-value. If the difference value is less than the specified threshold, we will terminate the local search procedure. The proof on the convergence of the Algorithm 2 is given in Theorem 1.

\begin{algorithm}[h]
\small
\centering
  \caption{$Search\_One\_Community$.}
  \label{alg:Framwork}
  \begin{algorithmic}[1]
    \Require
      A node set $\bm{ns}$.
    \Ensure
      One statistically significant community $\bm{sc}$.
    \State Initialization: $nodeIndex \leftarrow -1$, $choice \leftarrow ADD$, $minp \leftarrow pvalue(ns)$.

    \For {$ each\ node \in ns$}
        \State $rp \leftarrow pvalue(ns-node)$.
        \If {$ rp < minp$}
            \State $minp \leftarrow rp$.
            \State $nodeIndex \leftarrow node$.
            \State $choice \leftarrow REMOVE$.
        \EndIf
    \EndFor
        \For {$ each\ node \notin ns \ and \ has\  a\  neighbor\ in \ ns$}
        \State $ra \leftarrow pvalue(ns+node)$.
        \If {$ ra < minp$}
            \State $minp \leftarrow ra$.
            \State $nodeIndex \leftarrow node$.
            \State $choice \leftarrow ADD$.
        \EndIf
    \EndFor
    \If {$nodeIndex = -1$}
        \State $sc \leftarrow ns$
        \State $return\, \bm{sc}$.
    \ElsIf {$choice = ADD$}
        \State $ns.add(nodeIndex)$.
        \State $Search\_One\_Community(ns)$.
    \Else
        \State $ns.remove(nodeIndex)$.
        \State $Search\_One\_Community(ns)$.
    \EndIf
  \end{algorithmic}
\end{algorithm}

\begin{theorem}
  The Algorithm 2 can converge in a finite number of iterations.
\end{theorem}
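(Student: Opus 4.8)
The plan is to exhibit a strictly monotone potential that ranges over a finite set, and then invoke the elementary fact that a strictly decreasing sequence drawn from a finite set must terminate. The natural potential here is the $p$-value itself. Let $ns_0, ns_1, ns_2, \ldots$ denote the sequence of node sets on which the recursive routine $Search\_One\_Community$ is invoked, where $ns_0$ is the input seed set and $ns_{k+1}$ is the set obtained from $ns_k$ by the single add/remove operation selected in the $k$-th call. The goal is to show that this sequence is finite.

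First I would establish strict monotonicity: $pvalue(ns_{k+1}) < pvalue(ns_k)$ whenever the recursion proceeds from level $k$ to level $k+1$. This follows by tracing the control flow of Algorithm 2. The variable $minp$ is initialized to $pvalue(ns_k)$, and both $nodeIndex$ and $minp$ are overwritten only when some candidate operation (removing a node from $ns_k$, or adding a neighboring node) yields a $p$-value \emph{strictly} smaller than the current $minp$. Consequently, the recursion continues (i.e. $nodeIndex \neq -1$) precisely when the best operation found produces a set $ns_{k+1}$ with $pvalue(ns_{k+1}) = minp < pvalue(ns_k)$; otherwise the routine returns $ns_k$ and halts. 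Hence every non-terminating step strictly decreases the $p$-value.

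Second I would argue finiteness of the range. Each $ns_k$ is a subset of the finite vertex set $V$, so the recursion moves among the at most $2^{|V|}$ subsets of $V$, and by Equation (\ref{2.4}) (equivalently (\ref{2.7})) each such subset has a well-defined nonnegative $p$-value. Therefore the values $pvalue(ns_0), pvalue(ns_1), \ldots$ all lie in a fixed finite set of real numbers that is bounded below by $0$. Combining this with the strict monotonicity established above, the sequence $\{pvalue(ns_k)\}$ is strictly decreasing and confined to a finite set, so it can contain only finitely many terms (at most $2^{|V|}$). Since a further iteration is performed only when a strictly smaller value is produced, the total number of iterations is finite, which is exactly the claimed convergence. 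The early-termination threshold on the logarithmic difference mentioned in the text only causes the routine to stop sooner, so it does not affect (and in fact reinforces) this conclusion.

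The argument is elementary, so the main point requiring care is the monotonicity step: one must verify from the pseudocode that the acceptance test uses a \emph{strict} inequality and that the retained operation genuinely realizes the recorded minimum $minp$, so that no iteration can leave the $p$-value unchanged or increase it. A useful corollary worth stating is that strict monotonicity also precludes ever revisiting a previously encountered node set; this rules out the seemingly worrisome scenario of the search oscillating by adding a vertex and later removing it, since any such revisit would force the potential to repeat a value it has already dropped strictly below.
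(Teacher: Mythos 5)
Your proposal is correct and follows essentially the same route as the paper: a strictly decreasing $p$-value sequence ranging over the finitely many ($2^{|V|}-1$) subsets of $V$ forces termination. Your write-up is somewhat more careful in verifying from the pseudocode that the acceptance test is strict, but the underlying argument is identical.
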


\begin{proof}
  First, note that there are only a finite number sub-graphs ($2^{|V|}-1$) in the network, which means that the number of possible sub-graphs is finite when searching a statistically significant community. Second, each possible sub-graph appears at most once during the search procedure of Algorithm 2 since the $p$-value sequence is strictly decreasing. Therefore, the result follows.
\end{proof}

The result of Theorem 1 guarantees the convergence of the Algorithm 2. Meanwhile, since one node can be distributed into different communities in our algorithm, there may be some overlapping nodes among different commuinities. If the overlap between two communities is too high, it is reasonable to regard that one of the two communities is redundant. To solve the redundancy issue, we merge two communities $A$ and $B$ if $\frac{|A \cap B|}{min(|A|,|B|)}$ is larger than a threshold.

\section{EXPERIMENTS}
We compare our method with three existing algorithms: OSLOM \cite{lancichinetti2011finding}, ESSC \cite{wilson2014testing} and CPM \cite{palla2005uncovering} on both real data sets and LFR benchmark data sets. We choose these methods based on the following considerations: CPM is one of the most popular overlapping community detection methods, OSLOM and ESSC also detect statistically significant communities under the configuration model. Meanwhile, the source codes or software packages of these three methods are publicly available. We run OSLOM with their default parameter settings on undirect and unweighted networks, and the significance level $\alpha$ in ESSC is specified to be 0.01. Meanwhile, we use the CFinder software which is the implementation of CPM. We choose its best result when the clique size parameter is ranged from 3 to 5. Also, the significance level $\alpha$ in our method is specified to be 0.01, the overlap threshold is fixed to be 0.7, and the difference threshold with respect to the logarithm of $p$-value is specified to be 5.

To evaluate different community detection methods, here we choose Overlapping Normalized Mutual Information (ONMI) \cite{lancichinetti2009detecting} as the major performance indicator. Let $\Omega = \{\omega_1,\omega_2,..,\omega_K\}$ be the set of detected communities, then the binary membership variable $X_{k} \ (k=1,2,..,K)$ can be used to indicate if one node belongs to the $k^{th}$ community in $\Omega$. The probability distribution of ${X_k}$ is given by $P(X_k=1)=|\omega_k|/N$ and $P(X_k=0)=1-P(X_k=1)$, where $N$ is the number of nodes in the network. Similarly, the random variable $Y_{l} =1  \ (l=1,2,..,J)$ if one node belongs to the $l^{th}$ community in $C$, where $C=\{c_1,c_2,..,c_J\}$ represents the set of ground-truth communities. The probability distribution of $Y_{l}$ can be defined by  $P(Y_l=1) = |c_l|/N $ and $P(Y_l=0) = 1-P(Y_l=1)$. Consequently, we can obtain the joint probability distribution $P(X_k,Y_l)$ in a similar manner. Then, the conditional entropy of $X_k$ given $Y_l$ is defined as:

\begin{equation}\label{ex1}
  H(X_k|Y_l) = H(X_k,Y_l) - H(Y_l),
\end{equation}
where H($\cdot$) is the standard entropy function.

As a result, the entropy of $X_k$ with respect to all components of $Y = \{Y_1,Y_2,..,Y_{J} \}$ can be defined as:

\begin{equation}\label{ex2}
  H(X_k|Y) = \min\limits_{l\in \{1,2..,J\}}H(X_k|Y_l).
\end{equation}

The normalized conditional entropy of $X = \{X_1,X_2,..,X_{K} \}$ with respect to $Y$ is defined as:

\begin{equation}\label{ex3}
  H(X|Y) = \frac{1}{|\Omega|}\sum_{k}{\frac{H(X_k|Y)}{H(X_k)}}.
\end{equation}

Note that we can define $H(Y|X)$ in the same way. Finally the ONMI for two community structures $\Omega$ and $C$ is given by:

\begin{equation}\label{ex4}
  ONMI(X|Y) = 1 - [H(X|Y) + H(Y|X)]/2.
\end{equation}

The greater the value of ONMI is, the better the detection results are. In the most extreme case, ONMI = 1 indicates that the set of reported communities are exactly the same as the set of true communities.

%The details of these five metrics are described in the supplementary document.

In addition, we also employ other five metrics in the performance comparison on real data sets: Purity, Rand Index (RI), Precision, Recall, and F-measure. For each detected community $\omega_k$ in $\Omega$, we can find a ground-truth community $c_b$ from $C$ such that these two communities have the largest number of overlapping nodes. The nodes in $\omega_k \cap c_b$ can be regarded as correctly detected nodes from $\omega_k$. Then, Purity is defined as the fraction of correctly detected nodes:

\begin{equation}\label{ex5}
  Purity(\Omega,C) = \frac{1}{N}\sum_{k}\max\limits_{j}{|\omega_k \cap c_j|}.
\end{equation}

Based on the set of ground-truth communities $C$, two nodes are said to have the same label if they are contained in the same community from $C$. Then, each node pair with respect to the set of detected communities has four possibilities: (1) True Positive (TP): two nodes with the same label are allocated into the same community; (2) False Negative (FN): two nodes with the same label are distributed to different communities; (3) False Positive (FP): two nodes with different labels are allocated into the same community; (4) True Negative (TN): two nodes with different labels are distributed to different communities. The Rand Index (RI) is defined as the percentage of correctly allocated node pairs:

\begin{equation}\label{ex6}
  RI = \frac{TP + TN}{TP + TN + FP + FN}.
\end{equation}

Precision and Recall are defined as follows:

\begin{equation}\label{ex7}
  P = \frac{TP}{TP + TN},  \ R = \frac{TP}{TP + FN}.
\end{equation}

The F-measure is defined as the harmonic mean of precision and recall:

\begin{equation}\label{ex8}
  F-measure = \frac{2\times P\times R}{P + R}.
\end{equation}

\subsection{Real data sets}
In this section, we choose eight real data sets in the performance comparison: Karate (karate) \cite{zachary1977information}, Football (football) \cite{girvan2002community}, Personal Facebook (personal) \cite{wilson2014testing}, Political blogs (polblogs) \cite{adamic2005political}, Books about US politics (polbooks) \cite{krebs2013social}, and Railway (railway) \cite{chakraborty2014permanence}, DBLP collaboration network (dblp) \cite{yang2015defining} and Amazon product co-purchasing network (amazon) \cite{yang2015defining}. The detailed statistics of these data sets are summarized in Table \ref{datasets}, where $|V|$ and $|E|$ respectively denote the number of the nodes and the number of the edges, $\langle k \rangle$ represents the average degree of the nodes, $k_{max}$ represents the maximal degree of the nodes , $|C|$ denotes the number of true communities in the network and $|S_{max}|$, $|S_{min}|$ respectively denote the maximal and minimal community size of true communities in the network.

\begin{table*}[htbp]
\small
\centering
\setlength{\abovecaptionskip}{0pt}
\setlength{\belowcaptionskip}{10pt}
\caption{The characteristics of eight real data sets}
\begin{tabular}{cccccccc}\hline
Data sets & $|V|$ & $|E|$ & $\langle k \rangle$  & $k_{max}$  & $|C|$ & $|S_{max}|$ & $|S_{min}|$ \\ \hline
karate & 34 & 78 & 4.59 & 17 & 2 & 18 & 16\\
football & 115 & 613 &  10.57 & 12 & 12 & 13 & 5\\
personal & 561 & 8375 & 29.91 &  166& 8 & 150&3 \\
polblogs & 1490 & 19090 & 27.32 & 351  & 2 & 758 & 732 \\
polbooks & 105 & 441 & 8.4 & 25 & 3 & 49& 13 \\
railway  & 301 & 1226 & 6.36 & 48 & 21 & 46& 1 \\
dblp     &  0.3M   &  1M   &   6.62   & 549   & 13K  & 7556&6 \\
amazon   & 0.3M    &  0.9M    & 5.53  & 343   &  75K  &53551 &3\\\hline
\end{tabular}\label{datasets}
\end{table*}

\begin{table*}[htbp]
\small
\centering
\setlength{\abovecaptionskip}{0pt}
\setlength{\belowcaptionskip}{10pt}
\setlength{\extrarowheight}{0mm}
\caption{The performance comparison of different algorithms on real data sets with non-overlapping ground-truth communities}
\begin{tabular}{ccccccccccc}\hline
Data sets & Algorithm & ONMI & Purity & Precision & Recall & RI & F-measure  & $|C_{d}|$  &  $|S_{dmax}|$  & $|S_{dmin}|$  \\ \hline
 \multirow{4}*{karate} & DSC    &$\bm{0.92}$ & $\bm{1}$ &$\bm{1}$ &$\bm{1}$ & $\bm{1}$ & $\bm{1}$ & 2 & 17 & 16 \\
                        & ESSC   &$\bm{0.92}$ & $\bm{1}$ &$\bm{1}$ &$\bm{1}$ & $\bm{1}$ & $\bm{1}$ & 2 & 16 & 16   \\
                        & OSLOM  &$\bm{0.92}$  & $0.97$     &$0.94$ &$0.94$ & $0.94$ & $0.94$ & 2 & 19 & 16    \\
                        & CPM    &$0.87$ & $0.71$     &$0.53$ &$0.62$ & $0.62$ & $0.57$ & 3 & 25 & 3    \\ \hline

\multirow{4}*{football} & DSC    &$\bm{0.85}$ & $\bm{0.96}$ &$\bm{0.92}$ &$\bm{0.94}$ & $\bm{0.98}$ & $\bm{0.93}$& 12 & 14 & 4 \\
                        & ESSC   &$0.81$ & $0.91$      &$0.83$ &$0.67$ & $0.96$ & $0.74$ & 13 & 14 & 6    \\
                        & OSLOM  &$0.82$      & $0.92$      &$0.84$ &$0.92$ & $\bm{0.98}$ &$0.88$& 11 & 15 & 6    \\
                        & CPM    &$0.76$ & $0.92$      &$0.86$ &$0.77$ & $\bm{0.98}$ &$0.81$& 13 & 13 & 4    \\ \hline

\multirow{4}*{personal} & DSC    &$\bm{0.32}$ & $0.73$ &$0.58$ &$\bm{0.83}$ &$\bm{0.83}$ &$\bm{0.68}$      & 14 & 172 & 5  \\
                        & ESSC   &$0.23$      & $0.72$      &$0.57$ &$0.30$  & $0.82$ &$0.39$& 21 & 560 & 8     \\
                        & OSLOM  &$0.15$      & $\bm{0.74}$      &$\bm{0.63}$   &$0.26$   & $0.82$    & $0.37$      & 24 & 97 & 1        \\
                        & CPM    &$0.20$      & $0.55$      &$0.36$   &$0.73$       & $0.66$ & $0.48$     & 13 & 328 & 4                 \\ \hline

\multirow{4}*{polblogs} & DSC    &$\bm{0.27}$ & $0.93$      &$\bm{0.93}$ &$\bm{0.71}$  &$\bm{0.83}$ &$\bm{0.80}$     & 27 & 517  & 3  \\
                        & ESSC   &$0.19$      & $0.69$      &$0.54$      &$0.60$       & $0.53$    & $0.57$  & 5  & 1211 & 1   \\
                        & OSLOM  &$0.13$      & $\bm{0.95}$      &$0.92$      &$0.23$       & $0.60$    & $0.37$  & 13 & 277  & 1      \\
                        & CPM    &$0.15$      & $0.78$     &$0.79$      &$0.37$       & $0.67$    & $0.51$  & 10 & 674  & 3                \\ \hline

\multirow{4}*{polbooks} & DSC    &$0.22$      & $0.85$      &$0.73$      &$0.40$       & $0.73$    & $0.52$   & 6 & 36   & 4  \\
                        & ESSC   &$0.21$      & $0.55$      &$0.40$      &$0.58$       & $0.48$    & $0.47$   &8   & 104  & 5  \\
                        & OSLOM  &$\bm{0.45}$ & $0.85$      &$0.79$      &$\bm{0.82}$ &$\bm{0.84}$ & $\bm{0.81}$   & 3  & 53 & 11       \\
                        & CPM   &$0.33$      & $\bm{0.89}$ &$\bm{0.82}$ &$0.63$       & $0.63$    & $0.72$  & 6  & 36 & 4                \\ \hline

\multirow{4}*{railway} & DSC     &$\bm{0.23}$ & $\bm{0.56}$      &$0.31$      &$\bm{0.30}$  &$\bm{0.90}$ & $\bm{0.30}$     & 29  & 48 & 3   \\
                        & ESSC   &$0.11$      & $0.51$      &$0.30$      &$0.16$       & $0.70$     & $0.20$     & 45 & 128  &  7    \\
                        & OSLOM  &$0.16$      & $0.55$      &$\bm{0.37}$      &$0.25$       &$\bm{0.90}$ & $\bm{0.30}$         & 31  & 30 &  1       \\
                        & CPM&  $0.12$        & $0.49$      &$0.22$     &$0.26$       &$0.68$ & $0.24$        & 10  &  173    &  4         \\ \hline

%\multirow{4}*{dblp}     & DSC     & $0.14$          & $0.14$     & $0.02$    & $\bm{0.02}$     & $0.28$      &$0.02$     &  46k  & 2325  & 3  \\
%                        & ESSC    &               &      &     &      &       &     &     &    &     \\
%                        & OSLOM   & $0.11$        & $\bm{0.15}$     &$\bm{0.03}$     &$\bm{0.02}$    &$0.26$    & $\bm{0.03}$   & 18k  & 127 &  3       \\
%                        & CPM     & $\bm{0.19}$       & $0.13$      &$0.02$    & $\bm{0.02}$      & $\bm{0.29}$     & $0.02$        &  47k  & 2085     &   5       \\ \hline

%\multirow{4}*{amazon}    & DSC     & $\bm{0.21}$          &$\bm{ 0.12} $    &  $\bm{0.04}$  & $\bm{0.04}$     &$\bm{0.05}$     & $\bm{0.04} $   & 29k   & 1185 &  3 \\
%                        & ESSC    &               &      &     &      &       &     &     &    &     \\
%                        & OSLOM   & $0.18$        &  0.10   & 0.03   & 0.04  &$\bm{ 0.04}$ & 0.03   & 22k  &  859    &  3       \\
%                        & CPM     & $\bm{0.21}$       &  0.11     & 0.01   &   0.03    & 0.03     &  0.02       & 23k  & 240     &   5        \\ \hline

\end{tabular}\label{real_data_sets}
\end{table*}

As can be seen from Table \ref{datasets}, the number of nodes is ranged from 34 to 300000, and the number of communities is ranged from 2 to 75000, covering a broad range of properties of real networks. Furthermore, the ground-truth communities of the first 6 small networks have no overlapping nodes, while dblp and amazon have highly overlapping ground-truth communities. Note that the evaluation metrics except ONMI are mostly used in the scenarios that the ground-truth communities have no overlapping nodes. Therefore we only use ONMI as the performance indicator on the dblp and amazon data sets.

\subsubsection{Performance Comparison}
Table \ref{real_data_sets} presents the comparison result on the first six real data sets whose ground-truth communities have no overlapping nodes, and Table \ref{largenetworks} presents the comparison result in terms of ONMI on the dblp and amazon data sets. Meanwhile, Table \ref{real_data_sets} and Table \ref{largenetworks} also record the number of communities (denoted by $|C_{d}|$), the maximal community size (denoted by $|S_{dmax}|$) and the minimal community size (denoted by $|S_{dmin}|$) reported by each algorithm.

From the experimental results in Table \ref{real_data_sets} and Table \ref{largenetworks}, we have the following important observations and comments:

(1) All algorithms can achieve good performance in terms of ONMI on the karate and football data sets since these two small networks have well-separated communities. However, the ONMI values of all algorithms on the remaining six networks are very low because the community structure in these networks is weak. Meanwhile, it can also be observed that the ONMI value is small while the values of other five evaluation metrics are large on some networks. This is mainly due to the difference in the way of defining the metrics: ONMI evaluates the consensus between the set of detected communities and the set of ground-truth communities globally while the other five metrics are defined based on local information such as the number of correctly allocated node pairs. Anyway, although our method cannot always achieve the best performance on all real data sets, it outperforms the competing methods on most data sets in terms of ONMI, Recall, RI and F-measure.

(2) For the karate data set, our method and ESSC can achieve the perfect performance with respect to metrics such as RI and F-measure while ONMI is not 1. This is because (a) our method do not report the communities whose sizes are smaller than 3 or whose $p$-values are no less than the significance level, and ESSC also do not report some nodes in the network. (b) The five evaluation metrics except ONMI only use nodes in the reported communities in the performance assessment, while ONMI also considers the nodes that are not reported in the community detection results.

(3) For the large networks (dblp and amazon), our method outperformed ESSC and OSLOM in terms of ONMI. However, we have to admit the fact that out method cannot beat CPM on the dblp data set. This indicates that the performance of significance-based community detection methods on large networks should be further improved.

(4) We also compare the characteristics of reported communities of different methods on different data sets. With respect to the number of communities, our method is more accurate than other methods on the karate, football, railway and amazon data sets. In regard to the maximal community size and the minimal community size, no algorithm can be very accurate on most data sets. Overall, it can be observed that these characteristics of reported communities are positively correlated with those performance indicators such as ONMI. For instance, our method can achieve better performance than other methods on the football data set. Meanwhile, the number of communities reported by our method is exactly the same as the ground-truth on this data set.

\begin{table}[htb]
\small
\centering
\setlength{\abovecaptionskip}{0pt}
\setlength{\belowcaptionskip}{10pt}
\caption{The performance comparison of different algorithms on real data sets with overlapping ground-truth communities}
\begin{tabular}{cm{12mm}<{\centering}m{8mm}<{\centering}m{6mm}<{\centering}m{10mm}<{\centering}m{10mm}<{\centering}}\hline
Data sets  & Algorithm & ONMI & $|C_{d}|$ &  $|S_{dmax}|$  & $|S_{dmin}|$  \\ \hline
\multirow{4}*{dblp}     & DSC     & $0.14$         &  46k  & 2325  & 3  \\
                        & ESSC    & $0.08$          &  13k   & 1220   & 3     \\
                        & OSLOM   & $0.11$         & 18k  & 127 &  3       \\
                        & CPM     & $\bm{0.19}$    &  47k  & 2085     &   5       \\ \hline
\multirow{4}*{amazon}    & DSC     & $\bm{0.21}$   &    29k   & 1185 &  3 \\
                        & ESSC    &  $0.17$           &  26k   & 5290   &   3  \\
                        & OSLOM   & $0.18$         & 22k  &  859    &  3       \\
                        & CPM     & $\bm{0.21}$      & 23k  & 240     &   5        \\ \hline

\end{tabular}\label{largenetworks}
\end{table}

\subsubsection{Correlation with classical community evaluation functions}

To further validate the effectiveness of our $p$-value function, we check the correlation between the $p$-value and three well-known community scoring functions: conductance \cite{leskovec2009community}, ratio cut \cite{leskovec2010empirical} and modularity \cite{newman2004finding}. The conductance of a community $S$ is defined as:
%The definitions of these three scoring functions are described in the supplementary document.

\begin{equation}\label{4}
  Conductance(S) = \frac{E_{out}}{min(D_s,D-D_s)},
\end{equation}

\noindent
where the meanings of $E_{out}$, $D_s$ and $D$ have been given in Section 3.1. The ratio cut is defined as:

\begin{equation}\label{5}
  RatioCut(S) = \frac{E_{out}}{|S|(|V|-|S|)},
\end{equation}

\noindent
where $|S|$ denotes the number of nodes in the community $S$. The modularity of a single community is defined as:

\begin{equation}\label{6}
  Modularity(S) = \frac{E_{in}}{|E|} + \left({\frac{E_{in}+E_{out}}{2|E|}}\right)^{2},
\end{equation}

\noindent
where $|E|$ is the number of edges in the network.

Since the ground-truth communities are known for the eight real networks in Table \ref{datasets}, we use the set of true communities as the input to calculate the association between different scoring functions. Suppose there are $|C|$ ground-truth communities for a given network, we calculate the $p$-value, conductance, ratio cut and modularity for each of these $|C|$ communities. That is, we can obtain a score vector of length $|C|$ for each scoring function. If two scoring functions coincide with each other perfectly, the absolute correlation value between the two corresponding score vectors will be 1. Based on this observation, we calculate the Spearman's rank correlation coefficient between the score vector of the $p$-value and that of other three scoring functions on each network. The experimental results on the correlation relationship are presented in Table \ref{correlation}.

\begin{table}[htb]
\small
\centering
\setlength{\abovecaptionskip}{0pt}
\setlength{\belowcaptionskip}{10pt}
\setlength{\extrarowheight}{0mm}
\caption{Spearman correlation between $p$-value and other three scoring functions}
\begin{tabular}{cccc}\hline
  &  Conductance &  RatioCut  & Modularity\\ \hline
karate & 1 & 1 & -1 \\
football &  0.9231&	0.9231&	-0.9371 \\
personal&0.8313&	0.1667	&-0.8743 \\
polblogs&1	&1	&-1 \\
polbooks&1	&1	&-0.5\\
railway &0.9532&	0.2639	&-0.9477\\
dblp    &0.4748  & 0.1371 & -0.9640 \\
amazon  &0.6965  & 0.3988   & -0.9979\\ \hline
\end{tabular}\label{correlation}
\end{table}

\begin{figure*}[htb]
  \centering
  \includegraphics[width=17cm,height=15cm]{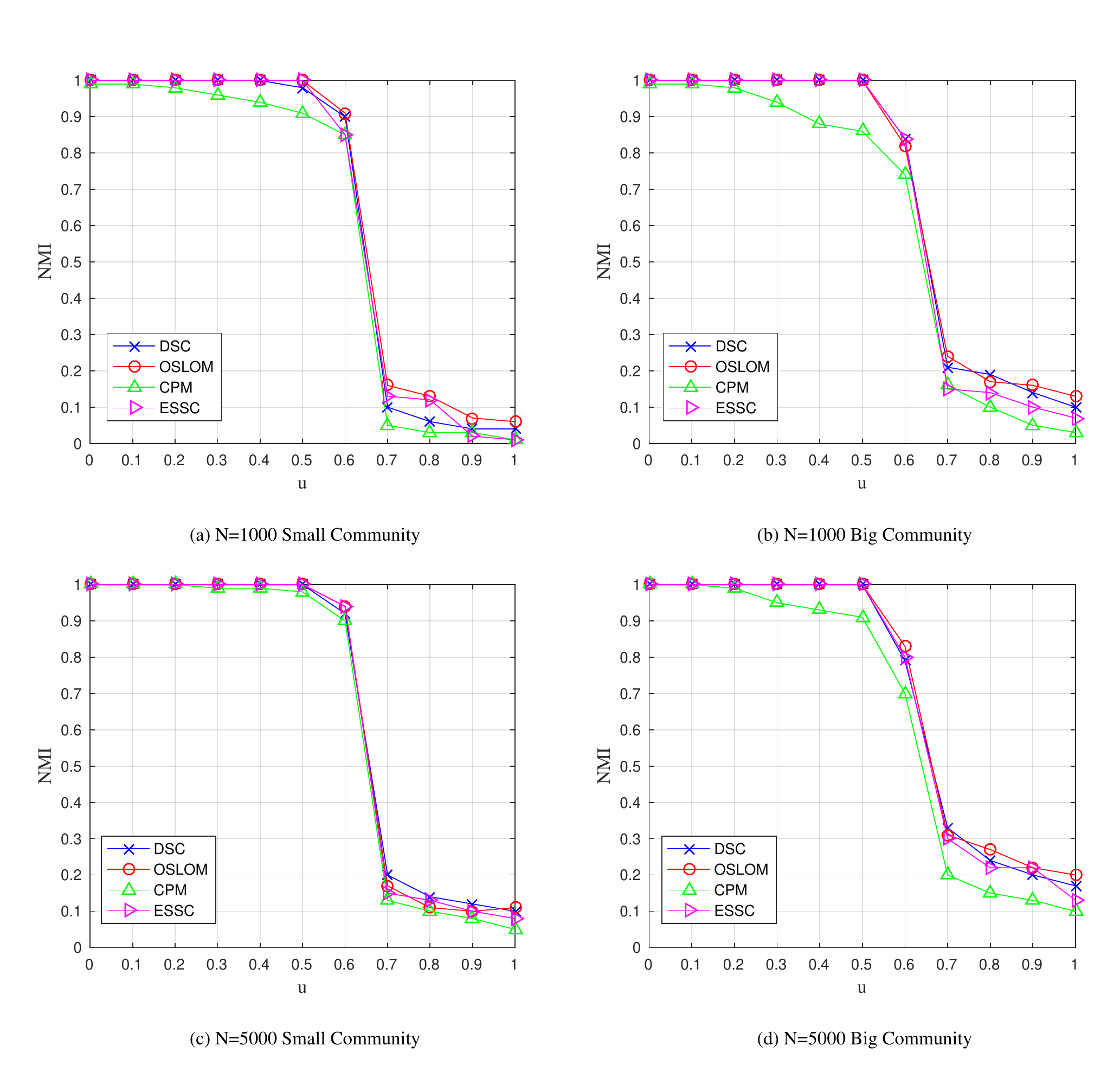}
  \caption{The comparison of different methods in terms of ONMI on the LFR benchmark networks without overlapping communities. }\label{LFR}
\end{figure*}

The $p$-value is positively correlated with conductance and ratio cut, since small scores are assigned to true communities in all these three functions. In contrast, modularity has a negative correlation with the $p$-value since it generates larger scores for real communities.

%\begin{figure*}[htb]
%  \centering
%  \includegraphics[width=18cm,height=15cm]{LFROverlap.pdf}
%  \caption{The comparison of different methods in terms of ONMI on the LFR benchmark networks with overlapping communities. }\label{LFROverlap}
%\end{figure*}

From the experimental results of Table \ref{correlation}, we have the following observations.

Firstly, the absolute values of correlation coefficients in Table \ref{correlation} are all no less than 0.5 (except two values) for small networks. Meanwhile, more than 1/3 of these coefficients are either 1 or -1. For two large networks, 3 out of 6 absolute correlation coefficients are larger than 0.5. This indicates that our $p$-value function has a good consensus with existing classical scoring functions, which further validates the correctness and effectiveness of the proposed $p$-value function.

Secondly, the proposed $p$-value function is highly correlated with the conductance function on all six small networks with correlation coefficients that are at least 0.8. On two large networks, the correlation coefficients are close to 0.5.  Meanwhile, the proposed $p$-value function is highly negatively correlated with the modularity measure on all networks. In contrast, the consensus with ratio cut is not so good. This is because conductance and modularity consider both the internal links within the community and external links outside the community in a manner that is similar to our definition on the $p$-value. However, the ratio cut function only considers external links in its definition.

Finally, the correlation relationship between the $p$-value function and three scoring functions are different across different data sets. This partially illustrates why different community detection methods exhibit different performance on different networks.

%\begin{figure*}[htbp]
%  \centering
%  \includegraphics[width=18cm,height=8cm]{LFROM.pdf}
%  \caption{The comparison of different methods in terms of ONMI on the LFR benchmark networks when $om$ is ranged from 2 to 6. }\label{LFROM}
%\end{figure*}

\subsection{LFR benchmark data sets}
The LFR model \cite{lancichinetti2008benchmark} can generate artificial networks which have a planted community structure. The LFR benchmark networks have heterogeneous distributions of vertex degree and community size \cite{lancichinetti2008benchmark}. There is an important parameter called mixing coefficient $\mu$ in LFR benchmark model. This mixing coefficient represents the desired average proportion of connections between a node and the nodes outside its community. Clearly,  small values of $\mu$ indicate that there is an obvious community structure in the generated network. In particular, $\mu = 0$ indicates that all links are within the community and $\mu = 1$ indicates that two nodes of each edge belong to different communities. In addition, another two parameters are used in generating overlapping communities: $on$ controls the number of overlapping nodes and $om$ specifies the number of communities that the overlapping node belongs to.

Our method is tested on LFR benchmark data sets with different network sizes and community sizes. Following the experimental settings in OSLOM \cite{lancichinetti2011finding}, we considered two network sizes: $N=1000$ and $N=5000$ and two community sizes: ``small'' community size in the range [10,50] and ``big'' community size in the range [20,100].

Fig.\ref{LFR} presents the performance of different methods in terms of ONMI on four LFR data sets without overlapping communities. Fig.\ref{LFR} shows that our method can achieve the same level performance as other three competing algorithms when the mixing coefficient $\mu$ is not larger than 0.6. When the mixing coefficient is bigger than 0.6, each planted ``true community'' will not be a community even in a weak sense since it has more external links than internal links \cite{radicchi2004defining}. Therefore, all methods have a quick decline of ONMI. Overall, the experimental results indicate that our method has comparable performance with both classical community detection methods and other significance-based community detection methods on detecting non-overlapping communities.

To test the performance of different methods on networks with overlapping communities, we still use the same parameters for network size and community size to generate four networks by setting $on=10\%*N$ and $om=3$ in the LFR model. The performance comparison results on the networks with overlapping communities are shown in Supplementary Fig.2. Here we also only use ONMI as the performance indicator for networks with overlapping communities. Although our method cannot outperform OSLOM when the network size is 1000 and community is big, it has better performance than the other two competing algorithms.

Meanwhile, we also generate networks with $N=5000$, $on=10\%*N$ and $\mu=0.3$ to check the performance when $om$ is varied from 2 to 6. The results of different algorithms are shown in Supplementary Fig.3. The increase of $om$ will lead to the performance decline of all methods, which indicates that it is more difficult to detect overlapping communities when the overlapping nodes belong to more communities. Moreover, our method can still achieve good performance even when the $om$ parameter is relatively large.

\section{Conclusion}
To address the problem of detecting statistically significant communities, we derive a tight upper bound for the $p$-value of a single community under the configuration model. Meanwhile, we also provide a systematic summarization and analysis on the existing methods for detecting the statistically significant communities. Based on the upper bound of $p$-value of a single community, we present a local search method to find statistically significant communities. Experimental results show that our method is comparable with the competing methods on detecting true communities.

% if have a single appendix:
%\appendix[Proof of the Zonklar Equations]
% or
%\appendix  % for no appendix heading
% do not use \section anymore after \appendix, only \section*
% is possibly needed

% use appendices with more than one appendix
% then use \section to start each appendix
% you must declare a \section before using any
% \subsection or using \label (\appendices by itself
% starts a section numbered zero.)
%

% use section* for acknowledgment
\ifCLASSOPTIONcompsoc
  % The Computer Society usually uses the plural form
  \section*{Acknowledgments}
\else
  % regular IEEE prefers the singular form
  \section*{Acknowledgment}
\fi

This work was partially supported by the Natural Science Foundation of China under Grant No. 61572094 and the Fundamental Research Funds for the Central Universities (No.DUT2017TB02).

% Can use something like this to put references on a page
% by themselves when using endfloat and the captionsoff option.
\ifCLASSOPTIONcaptionsoff
  \newpage
\fi

% trigger a \newpage just before the given reference
% number - used to balance the columns on the last page
% adjust value as needed - may need to be readjusted if
% the document is modified later
%\IEEEtriggeratref{8}
% The "triggered" command can be changed if desired:
%\IEEEtriggercmd{\enlargethispage{-5in}}

% references section

% can use a bibliography generated by BibTeX as a .bbl file
% BibTeX documentation can be easily obtained at:
% http://mirror.ctan.org/biblio/bibtex/contrib/doc/
% The IEEEtran BibTeX style support page is at:
% http://www.michaelshell.org/tex/ieeetran/bibtex/
%\bibliographystyle{IEEEtran}
% argument is your BibTeX string definitions and bibliography database(s)
%\bibliography{IEEEabrv,../bib/paper}
%
% <OR> manually copy in the resultant .bbl file
% set second argument of \begin to the number of references
% (used to reserve space for the reference number labels box)
%\begin{thebibliography}{1}
%\bibitem{IEEEhowto:Kopka}
%H.~Kopka and P.~W. Daly, \emph{A Guide to \LaTeX}, 3rd~ed.\hskip 1em plus
%  0.5em minus 0.4em\relax Harlow, England: Addison-Wesley, 1999.
%\end{thebibliography}
%\bibliography

\bibliographystyle{IEEEtran}
\bibliography{ref}

\end{document}